\newtheorem{theorem}{Theorem}
\newtheorem{corollary}[theorem]{Corollary}
\theoremstyle{definition}
\newcommand{\qedclaim}{\hfill $\diamond$ \medskip}
\DeclareMathOperator{\diam}{diam}
\author[Gagnon, Hassler, Huang, Krim-Yee, Mc Inerney, Mej\'ia Zacar\'ias, Seamone, Virgile]{\centering
Aliz\'ee Gagnon\affiliationmark{1} \and
Alexander Hassler\affiliationmark{2} \and
Jerry Huang\affiliationmark{3} \and
Aaron Krim-Yee\affiliationmark{4} \and
Fionn Mc Inerney\affiliationmark{5} \and
Andr\'es Mej\'ia Zacar\'ias\affiliationmark{6} \and
Ben Seamone\affiliationmark{1,7}\thanks{Corresponding author (\href{mailto:bseamone@dawsoncollege.qc.ca}{bseamone@dawsoncollege.qc.ca})} \and\\
Virg\'elot Virgile\affiliationmark{8}
}
\title[A method for eternally dominating strong grids]{A method for eternally dominating strong grids}
\affiliation{
D\'{e}partement d’informatique et de recherche op\'{e}rationnelle, Universit\'{e} de Montr\'{e}al, Montreal, QC, Canada \\
Facult\'e de m\'edecine et m\'edecine dentaire, Universit\'e catholique de Louvain Bruxelles-Woluwe, Brussels, Belgium \\
David Cheriton School of Computer Science, University of Waterloo, Waterloo, ON, Canada \\
Department of Bioengineering, McGill University, Montreal, QC, Canada \\
Universit\'e C\^ote d'Azur, Inria, CNRS, I3S, France \\
Instituto de Matem\'aticas, Universidad Nacional Aut\'onoma de M\'exico, M\'exico City, M\'exico \\
Mathematics Department, Dawson College, Montreal, QC, Canada \\
Department of Mathematics and Statistics, University of Victoria, Victoria, BC, Canada
}
\keywords{Eternal Domination, Combinatorial Games, Graphs, Graph Protection}
\begin{document}

\maketitle

\begin{abstract}
In the eternal domination game, an attacker attacks a vertex at each turn and a team of guards must move a guard to the attacked vertex to defend it. The guards may only move to adjacent vertices and no more than one guard may occupy a vertex. The goal is to determine the eternal domination number of a graph which is the minimum number of guards required to defend the graph against an infinite sequence of attacks.
In this paper, we continue the study of the eternal domination game on strong grids. Cartesian grids have been vastly studied with tight bounds for small grids such as $2\times n$, $3\times n$, $4\times n$, and $5\times n$ grids, and recently it was proven in [Lamprou et al., CIAC 2017, 393-404] that the eternal domination number of these grids in general is within $O(m+n)$ of their domination number which lower bounds the eternal domination number. Recently, Finbow et al. proved that the eternal domination number of strong grids is upper bounded by $\frac{mn}{6}+O(m+n)$. 
We adapt the techniques of [Lamprou et al., CIAC 2017, 393-404] to prove that the eternal domination number of strong grids is upper bounded by $\frac{mn}{7}+O(m+n)$. While this does not improve upon a recently announced bound of $\lceil\frac{m}{3}\rceil \lceil\frac{n}{3}\rceil+O(m\sqrt{n})$ [Mc Inerney, Nisse, P\'erennes, CIAC 2019] in the general case, we show that our bound is an improvement in the case where the smaller of the two dimensions is at most $6179$.
\end{abstract}

\section{Introduction}
\label{sec:intro}

\subsection{Background}

The graph security model of eternal domination was introduced in the 1990's with the study of the military strategy of Emperor Constantine for defending the Roman Empire in a mathematical setting \cite{AF95,Revelle97,RR00,Stewart99}.  The problem which is studied in these papers, roughly put, is how to defend a network of cities with a limited number of armies at your disposal in such a way that an army can always move to defend against an attack by invaders and do so for any sequence of attacks.  In the original version of eternal domination (also called ``infinite order domination'' \cite{BurgerCG+04} and ``eternal security''\cite{GHH05} in earlier works), $k$ guards are placed on the vertices of a graph $G$ so that they form a dominating set.  An infinite sequence of vertices is then revealed one at a time (called ``attacks'').  After each attack, a single guard is allowed to move to the attacked vertex.  If, after each attack, the guards maintain a dominating set, then we say that $k$ guards eternally dominate $G$.  The minimum $k$ for which $k$ guards can eternally guard $G$ for any sequence of attacks is called the eternal domination number of $G$, and is denoted $\gamma^{\infty}(G)$.

A subsequently introduced model, and the one we study here, allows any number of guards to move on their turn.  The minimum number of guards required to eternally dominate a graph $G$ in this model (called the ``all-guards move'' model) is denoted $\gamma^{\infty}_{all}(G)$, and is called the $m$-eternal domination number of $G$.  Typically, one requires that no two guards occupy the same vertex.  If one allows more one guard to occupy a vertex at a given time, then the corresponding parameter typically appears in the literature as $\gamma^{*\infty}_{all}(G)$; we do not consider this model here.  For more variants and a background on results related to eternal domination, the reader is referred to \cite{KlosMyn14}.  We also point out that eternal domination can also be considered a special case of the Spy Game, where an attacker (spy) moves at speed $s$ on the graph, while the guards are said to ``control'' the spy if one is distance at most $d$ from the spy at the end of their turn (see {\it e.g.}, \cite{CMM^+17,CMNP18}).  Eternal domination is then the special case of the Spy Game with $s = \diam(G)$ and $d=0$.

\subsection{Recent results}

As mentioned, we consider only the ``all guards move'' model.  The cases of paths and cycles for this variant of the game are trivial. In \cite{KM09}, a linear-time algorithm is given to determine $\gamma^{\infty}_{all}(T)$ for all trees $T$. In \cite{BAS15}, the eternal domination game was solved for proper interval graphs. In recent years, significant effort has been made in an attempt to determine the eternal domination number of Cartesian grids, $\gamma^{\infty}_{all}(P_n\square P_m)$ (see Figure~\ref{fig:grids}).  
Exact values were determined for $2\times n$ Cartesian grids in \cite{FinbowMB15, GKM13} and $4\times n$ Cartesian grids in \cite{BeatonFM13}. Bounds for $3\times n$ Cartesian grids were obtained in \cite{FinbowMB15} and improved in \cite{MessingerD15}, and exactly values for all $n$ were recently provided in \cite{finbow2020eternal}.  Bounds for $5\times n$ Cartesian grids were given in \cite{vBommel16}.
For general $m \times n$ Cartesian grids, it is clear that $\gamma^{\infty}_{all}(P_n\square P_m)$ must be at least the domination number, $\gamma(P_n\square P_m)$, and so by the result in \cite{GoncalvesPRT11} it follows that $\gamma^{\infty}_{all}(P_n\square P_m) \geq \lfloor\tfrac{(n-2)(m-2)}{5}\rfloor-4$. The best known upper bound for $\gamma^{\infty}_{all}(P_n\square P_m)$ was determined in \cite{LamprouMS16a}, where it was shown that $\gamma^{\infty}_{all}(P_n\square P_m) \leq \tfrac{mn}{5} + O(m+n)$, thus showing that $\gamma^{\infty}_{all}$ is within $O(m+n)$ of the domination number.

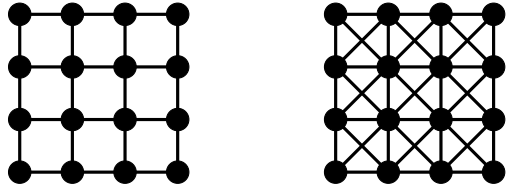
\begin{figure}[h!]\centering

\begin{tikzpicture}[scale=0.7]
\tikzstyle{vertex}=[draw,circle,fill=black,inner sep=3pt]

\draw [very thick] (0,0) grid (3,3);
\draw [very thick] (6,0) grid (9,3);

\draw [very thick] (6,0) to (9,3);
\draw [very thick] (7,0) to (9,2);
\draw [very thick] (8,0) to (9,1);
\draw [very thick] (6,1) to (8,3);
\draw [very thick] (6,2) to (7,3);

\draw [very thick] (6,3) to (9,0);
\draw [very thick] (6,2) to (8,0);
\draw [very thick] (6,1) to (7,0);
\draw [very thick] (7,3) to (9,1);
\draw [very thick] (8,3) to (9,2);

\node[vertex] (1) at (0,0) {};
\node[vertex] (2) at (0,1) {};
\node[vertex] (3) at (0,2) {};
\node[vertex] (4) at (0,3) {};
\node[vertex] (5) at (1,0) {};
\node[vertex] (6) at (1,1) {};
\node[vertex] (7) at (1,2) {};
\node[vertex] (8) at (1,3) {};
\node[vertex] (9) at (2,0) {};
\node[vertex] (10) at (2,1) {};
\node[vertex] (11) at (2,2) {};
\node[vertex] (12) at (2,3) {};
\node[vertex] (13) at (3,0) {};
\node[vertex] (14) at (3,1) {};
\node[vertex] (15) at (3,2) {};
\node[vertex] (16) at (3,3) {};

\node[vertex] (1) at (6,0) {};
\node[vertex] (2) at (6,1) {};
\node[vertex] (3) at (6,2) {};
\node[vertex] (4) at (6,3) {};
\node[vertex] (5) at (7,0) {};
\node[vertex] (6) at (7,1) {};
\node[vertex] (7) at (7,2) {};
\node[vertex] (8) at (7,3) {};
\node[vertex] (9) at (8,0) {};
\node[vertex] (10) at (8,1) {};
\node[vertex] (11) at (8,2) {};
\node[vertex] (12) at (8,3) {};
\node[vertex] (13) at (9,0) {};
\node[vertex] (14) at (9,1) {};
\node[vertex] (15) at (9,2) {};
\node[vertex] (16) at (9,3) {};

\end{tikzpicture}

\caption{The Cartesian grid $P_4\square P_4$ (left) and strong grid $P_4\boxtimes P_4$ (right).}
\label{fig:grids}
\end{figure}

Recently, Finbow et al.~studied the eternal domination game on strong grids, $P_n\boxtimes P_m$, which are, roughly, Cartesian grids where the diagonal edges exist (also known as ``king'' graphs)(see Figure~\ref{fig:grids}). They obtained an upper bound of $\frac{mn}{6}+O(m+n)$ for the eternal domination number of $P_n\boxtimes P_m$ \cite{F16}.  Note that it is trivially known that $\gamma(P_n\boxtimes P_m)=\lceil\frac{m}{3}\rceil \lceil\frac{n}{3}\rceil$. During the preparation of this paper, a parallel work announced the following general lower and upper bounds of $\lfloor\frac{n}{3}\rfloor \lfloor\frac{m}{3}\rfloor+\Omega(n+m)\leq \gamma^{\infty}_{all}(P_n\boxtimes P_m) \leq \lceil\frac{m}{3}\rceil \lceil\frac{n}{3}\rceil + O(m\sqrt{n})$, where $n \leq m$, and thus showing, for large enough values of $n$ and $m$, that $\gamma^{\infty}_{all}(P_n\boxtimes P_m) \approx \gamma(P_n\boxtimes P_m)$ (up to low order terms) \cite{MNP19}.

\subsection{Our results}

We show that $\gamma^{\infty}_{all}(P_n\boxtimes P_m)\leq \frac{mn}{7}+O(m+n)$ for all integers $n,m\geq9$ by adapting the techniques used in \cite{LamprouMS16a}.  
In Section 2, we establish the basic strategy used in the proofs which follow.  It can loosely be thought of as a strategy where the grid is partitioned into subgrids, guards which occupy the corners of $8 \times 8$ grids stay in place, while the guards on the interior of the grid rotate in such a way that a symmetric configuration to the original is obtained.  In Section 3, we show that this strategy easily works for the infinite Cartesian grid, and obtain the main result of the paper in Section 4.  Finally, in Section 5, we compare our results with those reported in \cite{MNP19}. In the spirit of the aforementioned papers focused on ``skinny'' Cartesian grids (those where the smallest dimension is bounded by or is equal to some constant), we show that the strategy presented here gives a better upper bound for $\gamma^{\infty}_{all}(P_n\boxtimes P_m)$ in the case where $n \leq m$ and $n$ is at most some constant.  We also believe that the strategy presented is interesting in its own right and could provide a path for analysis of strong grids in higher dimensions.

\section{Notations and the {\it Alternating} strategy}

We begin by formally defining the graph $P_n\boxtimes P_m$. Let $V(P_n) = \{0,\ldots,n-1\}$ and $V(P_m) = \{0,\ldots,m-1\}$. Then, each vertex in $P_n \boxtimes P_m$ is an ordered pair $(i,j)$, and two vertices $(i_1,j_1)$ and $(i_2,j_2)$ are adjacent if and only if $\max\{|i_2-i_1|,|j_2-i_j|\}=1$ (see Figure~\ref{fig:grids}).

In order to eternally dominate $P_n\boxtimes P_m$, we consider a strategy that cycles through two families of dominating sets, $D$ and $D'$ (see Figure~\ref{fig:domsets}).  Let $D$ be a set of vertices in $P_n\boxtimes P_m$ with the property that if $(i,j)$ is in  $D$ then so are $(i+2,j+1)$ and $(i-1,j+3)$. 
This definition implies that $D$ has a periodic nature, where every seventh vertex in a row or column of $P_n\boxtimes P_m$ contains a vertex in $D$.
Hence, $D$ can be viewed as a dominating set that contains the vertices $(i+2k+7l,j+k+7l)$ and $(i+k+7l,j-3k+7l)$ for some $i,j\geq0$ and all integers $k,l$ such that the resulting vertices have an $x$-coordinate and a $y$-coordinate greater than or equal to zero.  Similarly,
$D'$ is the dominating set that contains the vertices $(i+k+7l,j+3k+7l)$ and $(i+2k+7l,j-k+7l)$ for some $i,j\geq0$ and all integers $k,l$ such that the resulting vertices have an $x$-coordinate and a $y$-coordinate greater than or equal to zero.

If the guards are in a $D$ configuration, then the strategy for the guards is to have one guard move to the attacked vertex and for the rest of the guards to move accordingly to move into a $D'$ configuration and vice versa. 
For most attacks on the interior of the grid, only one response is possible.  However, if the guards occupy a $D$ or $D'$ configuration that contains $(i,j)$, then $(i-1,j)$ and $(i+1,j)$ are adjacent to two guards (assuming $(i,j)$ is far enough from the borders of the grid).  In the case where one of these vertices is attacked, the guard that is diagonally adjacent will defend against the attack (not the guard at $(i,j)$). Due to the guards alternating between two families of configurations $D$ and $D'$, we call this strategy, the {\it Alternating} strategy.

In the {\it Alternating} strategy, there are {\it anchor} guards which do not move from their vertices after an attack and they are determined by which vertex is attacked and the current configuration of the guards. Essentially, the anchor guards occupy the corners of $8\times8$ subgrids inside which the other guards move to protect against attacks and alternate to the next configuration.

\begin{figure}[h!]\centering

\begin{tikzpicture}[scale=0.45]
\tikzstyle{vertex}=[draw,circle,fill=black,inner sep=3pt]

\draw [very thick] (0,0) grid (10,10);
\draw [very thick] (14,0) grid (24,10);

\node[vertex] (1) at (0.5,0.5) {};
\node[vertex] (2) at (2.5,1.5) {};
\node[vertex] (3) at (4.5,2.5) {};
\node[vertex] (4) at (6.5,3.5) {};
\node[vertex] (5) at (8.5,4.5) {};

\node[vertex] (6) at (1.5,4.5) {};
\node[vertex] (7) at (3.5,5.5) {};
\node[vertex] (8) at (5.5,6.5) {};
\node[vertex] (9) at (7.5,7.5) {};
\node[vertex] (10) at (9.5,8.5) {};

\node[vertex] (11) at (7.5,0.5) {};
\node[vertex] (12) at (9.5,1.5) {};

\node[vertex] (13) at (0.5,7.5) {};
\node[vertex] (14) at (2.5,8.5) {};
\node[vertex] (15) at (4.5,9.5) {};

\node at (5,-1.5) {{\LARGE$D$}};

\node[vertex] (1) at (14.5,0.5) {};
\node[vertex] (2) at (15.5,3.5) {};
\node[vertex] (3) at (17.5,2.5) {};
\node[vertex] (4) at (19.5,1.5) {};
\node[vertex] (5) at (21.5,0.5) {};

\node[vertex] (6) at (14.5,7.5) {};
\node[vertex] (7) at (16.5,6.5) {};
\node[vertex] (8) at (18.5,5.5) {};
\node[vertex] (9) at (20.5,4.5) {};
\node[vertex] (10) at (22.5,3.5) {};

\node[vertex] (11) at (17.5,9.5) {};
\node[vertex] (12) at (19.5,8.5) {};
\node[vertex] (13) at (21.5,7.5) {};
\node[vertex] (14) at (23.5,6.5) {};

\node at (19,-1.5) {{\LARGE$D'$}};
\end{tikzpicture}

\caption{Snapshot of a $10\times10$ subgrid of a much larger grid, showing the positions of the guards in a $D$ (left) and a $D'$ (right) configuration.}
\label{fig:domsets}
\end{figure}
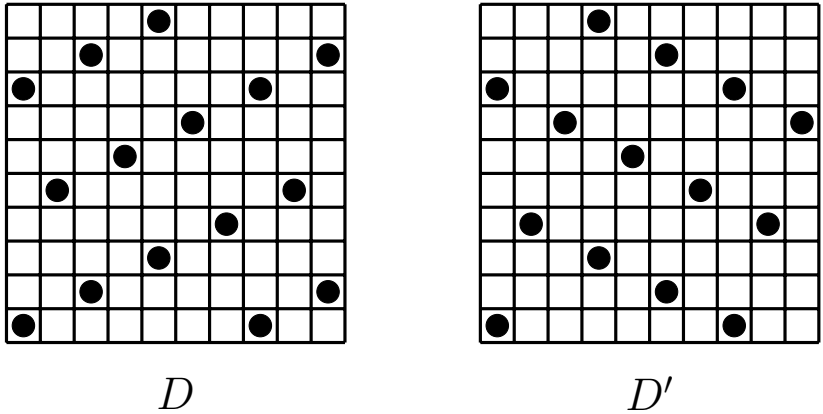

\section{Eternally dominating $P_{\infty} \boxtimes P_{\infty}$}

\begin{theorem}
The {\it Alternating} strategy eternally dominates $P_{\infty} \boxtimes P_{\infty}$.
\end{theorem}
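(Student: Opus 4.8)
The plan is to reduce the statement to a finite verification by exploiting three features of the setup: the period‑$7$ lattice structure of $D$ and $D'$, the full translation‑invariance of $P_{\infty}\boxtimes P_{\infty}$, and the redundancy in the cover (the density $\tfrac17$ of $D$ strictly exceeds the perfect‑code density $\tfrac19$, so some vertices are dominated twice). Concretely, I would prove: (i) $D$ and $D'$ are dominating sets; (ii) from any $D$‑configuration the guards can reach \emph{some} $D'$‑configuration using only king‑moves with no two guards ending on the same vertex; and (iii) this reconfiguration can be modified so that, additionally, one guard lands on any prescribed attacked vertex. The $D'\!\to\!D$ versions of (ii) and (iii) follow from the $D\!\to\!D'$ versions because $D'$ is the mirror image of $D$: taking a common base point, $D$ has, in each row $j$, its vertices in the columns $\equiv 2j$ modulo $7$, and $D'$ in the columns $\equiv -2j$ modulo $7$.

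For (i): since the closed neighbourhood of a vertex of a king graph is a $3\times 3$ square, $D$ dominates if and only if every $3\times 3$ square meets $D$, and since $D$ is a coset of the index‑$7$ sublattice generated by $(2,1)$ and $(-1,3)$ this is a check over one fundamental domain. The one‑line reason is that a $3\times 3$ square asks, across its three rows, for three consecutive columns to hit the residue set $\{2j,\,2j+2,\,2j+4\}$ modulo $7$, whose complement contains no three consecutive residues. The same computation handles $D'$. I would also record the alignment fact $D\cap D' = 7\mathbb{Z}\times 7\mathbb{Z}$ (common base point): the corners of any tiling of the plane by $8\times 8$ subgrids placed on a translate of this sublattice lie simultaneously in $D$ and in the correspondingly translated $D'$; these corner guards will be the \emph{anchors}.

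For (ii): I would fix such a tiling by $8\times 8$ subgrids, keep the four corner guards of each tile fixed, and exhibit inside one tile an explicit collision‑free bijection — the ``rotation'' — from the six non‑anchor guards of $D$ to the six non‑anchor guards of $D'$ in which each guard moves to a king‑neighbour. For the tile $\{0,\dots,7\}^2$ this is $(2,1)\mapsto(3,2)$, $(4,2)\mapsto(5,1)$, $(6,3)\mapsto(6,4)$, $(1,4)\mapsto(1,3)$, $(3,5)\mapsto(2,6)$, $(5,6)\mapsto(4,5)$. Because neighbouring tiles meet only along anchor rows and columns and the anchors do not move, these per‑tile rotations paste into a global move from $D$ to $D'$, and by periodicity the same pattern is used in every tile. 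Now observe that an attacker never gains by attacking a vertex already holding a guard, so we may assume the attacked vertex $v$ is not in $D$; and if $v\in D'$ then the default rotation already delivers a guard to $v$. Hence (iii) only needs to be treated when $v$ lies in neither $D$ nor $D'$.

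For (iii): given such a $v$, pick a guard $g$ of $D$ at king‑distance at most $1$ from $v$ (it exists by (i)) and send $g$ to $v$; $g$ was assigned some target $p\in D'$ by the default rotation, so one reroutes — a second guard of $D$ within king‑distance $1$ of $p$, available precisely because $p$ is doubly dominated (this is where the $\tfrac17>\tfrac19$ slack is spent), is sent to $p$ in place of its own default target — and one checks that this rerouting is absorbed within the tile or, in the few cases where $g$ must be an anchor, within that tile and a neighbour, terminating in a valid $D'$‑configuration possibly translated by a bounded vector; since $P_{\infty}\boxtimes P_{\infty}$ is homogeneous, ending in a translate of $D'$ is as good as ending in $D'$ itself. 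Up to translation by the lattice there are only finitely many positions for $v$ (and the $D\leftrightarrow D'$ symmetry collapses several of them), so this is a finite case analysis best presented through one or two figures. I expect this last step — carrying out the reroutings so that no two guards collide, the anchors' drift stays controlled, and the outcome is exactly a translate of $D'$ — to be the only real difficulty; everything else is bookkeeping with the two period‑$7$ lattices.
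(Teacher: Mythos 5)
Your parts (i) and (ii) are sound and essentially reproduce the paper's setup: the per-tile rotation you list (with the tile corners, i.e.\ the coset $D\cap D'$, as anchors) is exactly the paper's default movement pattern, and the reduction to attacks on vertices outside $D\cup D'$ is legitimate. The gap is in (iii), which is the actual content of the theorem, and your sketch of it is not just incomplete but internally inconsistent. As described, you send $g$ to the attacked vertex $v$ and a second guard $g_2$ to $g$'s abandoned target $p\in D'$; but this only moves the hole to $g_2$'s own target $p_2$, and no argument is given that the cascade terminates. In fact it cannot terminate in the way you claim: if the rerouting is absorbed in one or two tiles, the final occupied set agrees with the default $D'$ outside those tiles yet contains $v\notin D'$, so it is $D'$ up to a finite symmetric difference --- which is neither $D'$ nor a translate of $D'$ (distinct cosets of the $D'$-lattice are disjoint), so the invariant ``guards occupy a $D$- or $D'$-type configuration'' is destroyed and the strategy cannot be iterated. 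If instead the outcome is to be a genuine translate of $D'$ --- necessarily the coset through $v$ --- then essentially every guard's target differs from the default rotation, the anchors must be re-chosen as a function of $v$, and one must verify, for each of the finitely many residues of $v$ modulo the period lattice (the paper cuts these down to four by symmetry), that there is a collision-free king-move matching from $D$ onto that coset with a guard landing on $v$. That verification is precisely the paper's Table~1 and Figure~3, and it is precisely the step you defer as ``the only real difficulty.''

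Two smaller points. Your appeal to the density slack $\tfrac17>\tfrac19$ to guarantee a second $D$-guard adjacent to $p$ is not a proof: double domination by $D$ is not uniform (for a guard at $(i,j)$, only the horizontal neighbours $(i\pm1,j)$ and the diagonal neighbours in the direction of the generator $(2,1)$ are covered twice; e.g.\ $(i,j+1)$ and $(i-1,j+1)$ are dominated only by $(i,j)$), so availability of the rerouted guard must itself be checked case by case. And the homogeneity remark is fine as far as it goes --- the paper's strategy also ends in whichever coset of the $D'$-pattern contains $v$ --- but invoking it does not remove the need to exhibit the global move into that coset. In short, you have the right framework (fixed-period lattices, $8\times8$ tiles, anchors, a six-guard rotation), but the paper makes the tiling and anchors depend on the attacked vertex and then verifies the four resulting global patterns explicitly; your fixed-tiling-plus-local-repair scheme omits, and partially mis-specifies, that verification, which is the heart of the proof.
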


\begin{proof}
Consider the guards initially beginning in a $D$ configuration (where now coordinates are permitted to be any integer).  We will show how the guards can move to a $D'$ configuration containing the attacked vertex for all possible attacks (within symmetry). We omit the proof of the movements of the guards from a $D'$ configuration to a $D$ configuration that contains the attacked vertex as it is analogous to the movements in the opposite direction.

Due to symmetry, we only have to analyse the eight possible attacks on the vertices adjacent to a guard that occupies $(i,j)$. We only consider the movements of the guards in the corresponding $8\times8$ subgrid of the attacked vertex as the remaining subgrids will all be symmetric to this one and so, the movements of the guards as well. Finally, we only have to analyze four of the eight possible attacks since an attack at $(i+1,j)$ is symmetric to an attack at $(i-1,j-1)$, an attack at $(i+1,j+1)$ is symmetric to an attack at $(i-1,j)$, an attack at $(i-1,j+1)$ is symmetric to an attack at $(i+1,j-1)$, and an attack at $(i,j+1)$ is symmetric to an attack at $(i,j-1)$.

\begin{table}[h!]
\centering
\scalebox{0.93}{\begin{tabular}{|c|c|c|}
  \hline 
   Attacked vertex & Anchor vertices & Guard movements \\ \hline
   \multirow{6}{*}{$(i,j-1)$} &  & $(i,j)\rightarrow(i,j-1)$ \\
   & $(i-1,j+3)$  & $(i+1,j-3)\rightarrow(i+2,j-2)$ \\
   & $(i-1,j-4)$  & $(i+3,j-2)\rightarrow(i+4,j-3)$ \\
   & $(i+6,j+3)$ & $(i+5,j-1)\rightarrow(i+5,j)$ \\
   & $(i+6,j-4)$ & $(i+4,j+2)\rightarrow(i+3,j+1)$ \\
   &  & $(i+2,j+1)\rightarrow(i+1,j+2)$ \\ \hline
   \multirow{6}{*}{$(i+1,j-1)$} &  & $(i,j)\rightarrow(i+1,j-1)$ \\
   & $(i-4,j+5)$  & $(i+2,j+1)\rightarrow(i+2,j+2)$ \\
   & $(i-4,j-2)$ & $(i+1,j+4)\rightarrow(i,j+3)$ \\
   & $(i+3,j+5)$ & $(i-1,j+3)\rightarrow(i-2,j+4)$ \\
   & $(i+3,j-2)$ & $(i-3,j+2)\rightarrow(i-3,j+1)$ \\
   &  & $(i-2,j-1)\rightarrow(i-1,j)$ \\ \hline
   \multirow{6}{*}{$(i+1,j)$} &  & $(i,j)\rightarrow(i-1,j+1)$ \\
   & $(i-3,j+2)$  & $(i+2,j+1)\rightarrow(i+1,j)$ \\
   & $(i-3,j-5)$  & $(i+3,j-2)\rightarrow(i+3,j-1)$ \\
   & $(i+4,j+2)$ & $(i+1,j-3)\rightarrow(i+2,j-4)$ \\
   & $(i+4,j-5)$ & $(i-1,j-4)\rightarrow(i,j-3)$ \\
   &  & $(i-2,j-1)\rightarrow(i-2,j-2)$ \\ \hline
   \multirow{6}{*}{$(i+1,j+1)$} &  & $(i,j)\rightarrow(i+1,j+1)$ \\
   & $(i-2,j-1)$  & $(i+2,j+1)\rightarrow(i+3,j)$ \\
   & $(i-2,j+6)$  & $(i+4,j+2)\rightarrow(i+4,j+3)$ \\
   & $(i+5,j-1)$ & $(i+3,j+5)\rightarrow(i+2,j+4)$ \\
   & $(i+5,j+6)$ & $(i+1,j+4)\rightarrow(i,j+5)$ \\
   &  & $(i-1,j+3)\rightarrow(i-1,j+2)$ \\ \hline 
\end{tabular}}
\caption{Movements of guards from a $D$ to a $D'$ configuration in the $8\times8$ subgrid corresponding to all possible attacks (less symmetric cases).}
\label{tab:guardmoves}
\end{table}

\begin{figure}[h!]\centering

\begin{tikzpicture}[scale=0.5]
\tikzstyle{vertex}=[draw,circle,fill=black,inner sep=3pt]

\draw [very thick] (0,0) grid (8,8);
\draw [very thick] (13,0) grid (21,8);
\draw [very thick] (0,-11) grid (8,-3);
\draw [very thick] (13,-11) grid (21,-3);

\draw [->,ultra thick] (1.5,4.5) to (1.5,3.5);
\draw [->,ultra thick] (2.5,1.5) to (3.5,2.5);
\draw [->,ultra thick] (4.5,2.5) to (5.5,1.5);
\draw [->,ultra thick] (6.5,3.5) to (6.5,4.5);
\draw [->,ultra thick] (5.5,6.5) to (4.5,5.5);
\draw [->,ultra thick] (3.5,5.5) to (2.5,6.5);

\node[vertex] (1) at (1.5,4.5) {};
\node[vertex][gray] (2) at (2.5,1.5) {};
\node[vertex][gray] (3) at (4.5,2.5) {};
\node[vertex][gray] (4) at (6.5,3.5) {};
\node[vertex][gray] (5) at (3.5,5.5) {};
\node[vertex][gray] (6) at (5.5,6.5) {};
\node[vertex][gray] (7) at (6.5,3.5) {};

\node[vertex] [gray] at (0.5,0.5) {};
\node[vertex] [gray] at (7.5,0.5) {};
\node[vertex] [gray] at (0.5,7.5) {};
\node[vertex] [gray] at (7.5,7.5) {};

\node at (4,-1.5) {$(i,j-1)$ attacked};

\draw [->,ultra thick] (14.5,4.5) to (14.5,3.5);
\draw [->,ultra thick] (15.5,1.5) to (16.5,2.5);
\draw [->,ultra thick] (17.5,2.5) to (18.5,1.5);
\draw [->,ultra thick] (19.5,3.5) to (19.5,4.5);
\draw [->,ultra thick] (18.5,6.5) to (17.5,5.5);
\draw [->,ultra thick] (16.5,5.5) to (15.5,6.5);

\node[vertex][gray] (1) at (14.5,4.5) {};
\node[vertex][gray] (2) at (15.5,1.5) {};
\node[vertex]       (3) at (17.5,2.5) {};
\node[vertex][gray] (4) at (19.5,3.5) {};
\node[vertex][gray] (5) at (16.5,5.5) {};
\node[vertex][gray] (6) at (18.5,6.5) {};
\node[vertex][gray] (7) at (19.5,3.5) {};

\node[vertex] [gray] at (13.5,0.5) {};
\node[vertex] [gray] at (20.5,0.5) {};
\node[vertex] [gray] at (13.5,7.5) {};
\node[vertex] [gray] at (20.5,7.5) {};

\node at (17,-1.5) {$(i+1,j-1)$ attacked};

\draw [->,ultra thick] (1.5,-6.5) to (1.5,-7.5);
\draw [->,ultra thick] (2.5,-9.5) to (2.5,-8.5);
\draw [->,ultra thick] (4.5,-8.5) to (5.5,-9.5);
\draw [->,ultra thick] (6.5,-7.5) to (6.5,-6.5);
\draw [->,ultra thick] (5.5,-4.5) to (4.5,-5.5);
\draw [->,ultra thick] (3.5,-5.5) to (2.5,-4.5);

\node[vertex][gray] (1) at (1.5,-6.5) {};
\node[vertex][gray] (2) at (2.5,-9.5) {};
\node[vertex][gray] (3) at (4.5,-8.5) {};
\node[vertex][gray] (4) at (6.5,-7.5) {};
\node[vertex]       (5) at (3.5,-5.5) {};
\node[vertex][gray] (6) at (5.5,-4.5) {};
\node[vertex][gray] (7) at (6.5,-7.5) {};

\node[vertex] [gray] at (0.5,-10.5) {};
\node[vertex] [gray] at (7.5,-10.5) {};
\node[vertex] [gray] at (0.5,-3.5) {};
\node[vertex] [gray] at (7.5,-3.5) {};

\node at (4,-12.5) {$(i+1,j)$ attacked};

\draw [->,ultra thick] (14.5,-6.5) to (14.5,-7.5);
\draw [->,ultra thick] (15.5,-9.5) to (16.5,-8.5);
\draw [->,ultra thick] (17.5,-8.5) to (18.5,-9.5);
\draw [->,ultra thick] (19.5,-7.5) to (19.5,-6.5);
\draw [->,ultra thick] (18.5,-4.5) to (17.5,-5.5);
\draw [->,ultra thick] (16.5,-5.5) to (15.5,-4.5);

\node[vertex][gray] (1) at (14.5,-6.5) {};
\node[vertex]       (2) at (15.5,-9.5) {};
\node[vertex][gray] (3) at (17.5,-8.5) {};
\node[vertex][gray] (4) at (19.5,-7.5) {};
\node[vertex][gray] (5) at (16.5,-5.5) {};
\node[vertex][gray] (6) at (18.5,-4.5) {};
\node[vertex][gray] (7) at (19.5,-7.5) {};

\node[vertex] [gray] at (13.5,-10.5) {};
\node[vertex] [gray] at (20.5,-10.5) {};
\node[vertex] [gray] at (13.5,-3.5) {};
\node[vertex] [gray] at (20.5,-3.5) {};

\node at (17,-12.5) {$(i+1,j+1)$ attacked};
\end{tikzpicture}

\caption{Movements of guards from a $D$ to a $D'$ configuration in the $8\times8$ subgrid corresponding to all possible attacks (less symmetric cases). The black guard occupies vertex $(i,j)$ and the four anchor guards are the guards in the corners.}
\label{fig:guardmoves}
\end{figure}
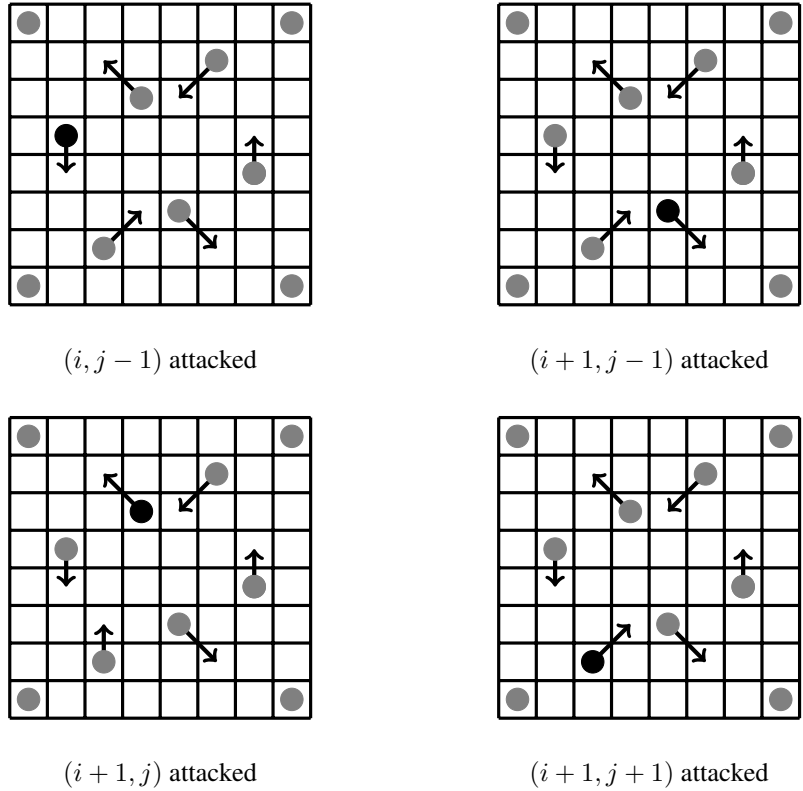

It is easy to verify that the guards' movements are possible and that they transition into a $D'$ configuration after each attack (see Figure~\ref{fig:guardmoves}). Since the grid is infinite, there are an infinite number of guards occupying the vertices of a $D$ or $D'$ configuration and so, any time a guard is required to move to a vertex by the {\it Alternating} strategy, he will always exist and, from Table~\ref{tab:guardmoves} and Figure~\ref{fig:guardmoves}, we know the guards will always transition from $D$ to $D'$ or vice versa with the attacked vertex occupied. Thus, the guards can clearly do this strategy indefinitely and hence, they eternally dominate $P_{\infty} \boxtimes P_{\infty}$.
\end{proof} 

\section{Eternally dominating $P_n \boxtimes P_m$}

We proceed to the case where the grid is finite and show that for $n,m\geq9$, $\gamma^\infty_{all}(P_m \boxtimes P_{n})\leq \frac{mn}{7}+O(m+n)$. In order to facilitate obtaining an exact value for the $O(m+n)$ term, we consider different cases which depend on the divisibility of $n$ and $m$.  We first provide a strategy for the finite grid $P_n \boxtimes P_m$ when $n\equiv m \equiv 2 \ (\mathrm{mod} \ 7)$ and $n,m\geq9$, which utilizes the {\it Alternating} strategy with an adjustment to deal with the borders of the grid.
We then generalize this strategy to any $n\times m$ grid for $n,m\geq9$ by employing two disjoint strategies.

\begin{theorem} \label{thm:main}
For any two integers $n,m\geq9$ such that $n\equiv m \equiv 2 \ (\mathrm{mod} \ 7)$, $\gamma^\infty_{all}(P_m \boxtimes P_n)\leq \frac{mn}{7}+\frac{8}{7}(m+n-1)$.
\end{theorem}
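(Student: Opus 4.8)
The plan is to run the \emph{Alternating} strategy on the finite grid wherever it is applicable and to pre-place a bounded set of extra guards along the four sides so as to absorb every attack whose infinite-grid response would reach a vertex outside $P_m\boxtimes P_n$.

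Write $m=7b+2$ and $n=7a+2$; then the target quantity $\frac{mn}{7}+\frac{8}{7}(m+n-1)$ is exactly the integer $7ab+10a+10b+4$, which is convenient for bookkeeping. First I would fix a phase of the family $D$ so that it sits symmetrically in the grid — the congruences $m\equiv n\equiv 2\ (\mathrm{mod}\ 7)$ are precisely what make the period-$7$ lattice $D$ (whose fundamental domain has area $7$, since the generating vectors $(2,1)$ and $(-1,3)$ have determinant $7$) line up with the two boundary rows and the two boundary columns. The initial configuration is then $D_{\mathrm{fin}}=(D\cap V(P_m\boxtimes P_n))\cup B$, where $B$ is an explicit set of \emph{border guards} placed on (and just inside) the top and bottom rows and the left and right columns, with the four corner $8\times8$ windows reinforced.

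Second, I would verify that $D_{\mathrm{fin}}$ is a dominating set — interior vertices because $D$ already dominates them, boundary and corner vertices because of $B$ — and that $|D_{\mathrm{fin}}|\le 7ab+10a+10b+4$. The count splits as $|D\cap V|=\frac{mn}{7}+O(m+n)$, coming from the density $\tfrac17$ of the lattice, plus $|B|$; here the congruences are used a second time, so that the two $O(m+n)$ contributions combine to exactly $\frac{8}{7}(m+n-1)$. This step is careful but routine: one counts, side by side, how many lattice points are lost to the truncation and how many border guards each side and each corner window requires.

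Third, for the dynamics I would maintain the invariant that the configuration is always either $D\cup B$ (a ``bordered $D$'') or $D'\cup B'$ (a ``bordered $D'$''), where $B,B'$ are the two admissible positions of the border guards, and show the game continues from either. For an attack whose surrounding $8\times8$ window lies entirely inside the grid, the response of the previous theorem (the one asserting that the \emph{Alternating} strategy eternally dominates $P_\infty\boxtimes P_\infty$) applies verbatim: Table~\ref{tab:guardmoves} and Figure~\ref{fig:guardmoves} move guards only inside that window, take $D$ to $D'$ with the attacked vertex occupied, and leave $B$ untouched; the case $D'\to D$ is symmetric. For an attack whose window would leave the grid — one in a border strip or a corner square — I would give the modified response by a case analysis paralleling Table~\ref{tab:guardmoves}, in which certain guards of $B$ take over the roles of the missing anchors and movers, checking in each case that no two guards end on the same vertex, that the attacked vertex is occupied, and that the new configuration is again a bordered $D$ or $D'$. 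Since there are finitely many case-types and each preserves the invariant, the guards defend against every sequence of attacks, giving $\gamma^\infty_{all}(P_m\boxtimes P_n)\le 7ab+10a+10b+4=\frac{mn}{7}+\frac{8}{7}(m+n-1)$. I expect the corner cases of this last step to be the main obstacle: there two border strips overlap, the supply of usable anchors is smallest, and one must simultaneously keep the configuration legal (collision-free and dominating) and keep it of the exact bordered-$D$/$D'$ type so that the argument can be iterated — it is these corner windows that dictate the reinforcement built into $B$, and hence the precise constant $\frac{8}{7}(m+n-1)$, which is also where the hypothesis $m\equiv n\equiv 2\ (\mathrm{mod}\ 7)$ is essential.
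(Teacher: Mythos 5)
Your overall plan — run the \emph{Alternating} strategy in the bulk, exploit $m\equiv n\equiv 2\pmod 7$ for alignment, and add $O(m+n)$ border guards to patch the boundary — is the same plan the paper follows, and your arithmetic identity $\frac{mn}{7}+\frac{8}{7}(m+n-1)=7ab+10a+10b+4$ is correct. But the proof has a genuine gap exactly where you flag it: the set $B$ is never specified, and the boundary/corner case analysis that is supposed to ``parallel Table~\ref{tab:guardmoves}'' is announced rather than carried out. Without an explicit $B$ you cannot verify domination, you cannot verify that no collisions occur, and you cannot verify the count; the claim that the two $O(m+n)$ contributions ``combine to exactly $\frac{8}{7}(m+n-1)$'' is asserted, not proved. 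Note also that taking $D\cap V$ over the \emph{whole} grid already places roughly $\frac{mn}{7}$ guards, some of them on the border ring, so the budget left for $B$ is tight and the bookkeeping you call routine is precisely what produces (or fails to produce) the stated constant. Since you yourself identify the corner windows as the main obstacle and leave them unresolved, the statement is not established by the proposal.

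For comparison, the paper resolves the boundary not by having border guards ``take over the roles of the missing anchors and movers'' inside an $8\times 8$ window, but by a different mechanism: the interior guards number exactly $\frac{(n-2)(m-2)}{7}$ and run the \emph{Alternating} strategy only on the interior $(n-2)\times(m-2)$ region; the border ring carries $4$ permanent corner guards plus $5$ mobile guards on each path of $7$ border vertices, which gives the exact total $\frac{(n-2)(m-2)}{7}+\frac{10}{7}(m+n-4)+4=\frac{mn}{7}+\frac{8}{7}(m+n-1)$. Border attacks are handled entirely by the border guards (the interior never moves), each path cycling among three formations, every one of which dominates the adjacent row or column of the neighbouring $7\times 7$ subgrid; this guarantees that whenever the \emph{Alternating} strategy needs a guard to enter the interior from the border, one is available, and the interior guard that would leave is absorbed back into the path while a legal formation is maintained. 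Your invariant of just two synchronized border states $B,B'$ is likely too rigid for this, since the border formations change in response to border attacks independently of whether the interior is in $D$ or $D'$. Finally, the congruence condition lets the paper compress the whole verification to a single $9\times 9$ grid by symmetry, which is the step that makes the finite case analysis tractable; supplying an analogue of that reduction, or else the full explicit case analysis for your $B$, is what your write-up still needs.
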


\begin{proof}
We use the fact that $n\equiv m \equiv 2 \ (\mathrm{mod} \ 7)$ to reduce the analysis of the guards' strategy to the case of a $9\times9$ grid. Essentially, the non-border vertices can be partitioned into $\frac{(n-2)(m-2)}{7}$ $7\times7$ subgrids since $n\equiv m \equiv 2 \ (\mathrm{mod} \ 7)$. We place one guard in each of the corners of the $n\times m$ grid and these guards never move. Finally, we can partition the sides of the grid (not including the corners) into paths of seven vertices.

We implement the {\it Alternating} strategy in all of the $7\times7$ subgrids which means they will all have identical configurations. Hence, we can focus just on the case of the $7\times7$ subgrids that touch the border vertices of the grid to ensure that the guards from the border can move into these grids when needed. Thus, we contract the $n\times m$ grid into a $9\times 9$ grid and show a winning strategy for the guards there which ensures the borders of the $n\times m$ grid will be protected symmetrically for each of the paths of seven vertices that make up the borders and that the $7\times 7$ subgrids adjacent to the borders are symmetric to all the other $7\times 7$ subgrids. This strategy can then be easily ``translated'' to any of the $7\times 7$ subgrids that touch the border vertices to gain a global strategy.

We show a winning strategy for the guards in the $9\times 9$ grid where four guards remain in the corners indefinitely, five guards occupy each of the paths of seven vertices in between the corners (on the borders of the grid), and seven guards from the {\it Alternating} strategy occupy the $7\times7$ subgrid in the middle (see Figure~\ref{fig:9x9grid}). The five guards on each of the paths of seven vertices initially occupy the five central vertices, leaving the leaves empty. If any border vertices get attacked, then they must be one of the leaves of the paths of seven vertices and the closest guard on the corresponding path moves to the attacked vertex. The remaining four guards on the same path stay still, as well as all seven of the guards in the interior of the grid, and the guards on each of the other paths move to a symmetric formation as the path that was attacked. Any subsequent attack on a border vertex is dealt with in the same fashion, {\it i.e.}, if the other leaf is attacked, then the guards on the path move into a symmetric formation with one guard on the attacked leaf and the other four guards occupying a sequence of four vertices non-adjacent to the fifth guard and not including any leaf. If an attack occurs on a non-leaf vertex of the path, then the five guards move back into their initial formation which includes neither of the leaves. Note that the interior guards never move if a border vertex is attacked and the guards on each of the paths are in symmetric positions.

Now, for each guard the {\it Alternating} strategy requires to move in from a border vertex, it requires a guard to move out from the interior vertices. The exchange is easy to facilitate since the guard moving out of the interior will always move onto the same border path that the guard moving in to the interior previously occupied. In all three of the possible configurations of the guards on the border vertices, the guards occupy a dominating set of the row or column of vertices adjacent to them in the $7\times7$ subgrid. Hence, there is a guard available to move to whichever vertex requires a guard to move to it and the guard leaving the interior can always move onto the border path as the guards can easily maneuver to leave an adjacent vertex empty for him while maintaining one of the three formations.

Thus, the {\it Alternating} strategy with the extra guards on the borders of the grid, eternally dominates $P_m \boxtimes P_n$. This strategy uses $\frac{(n-2)(m-2)}{7}+(2)\frac{5}{7}(m-2+n-2)+4=\frac{mn}{7}+\frac{8}{7}(m+n-1)$ guards which gives our result.
\end{proof}

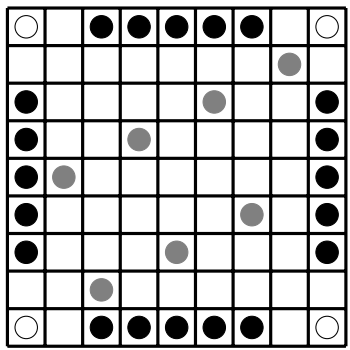
\begin{figure}[h!]\centering

\begin{tikzpicture}[scale=0.5]
\tikzstyle{vertex}=[draw,circle,fill=white,inner sep=3pt]

\draw [very thick] (0,0) grid (9,9);

\node[vertex] (1) at (0.5,8.5) {};
\node[vertex] (2) at (8.5,8.5) {};
\node[vertex] (3) at (0.5,0.5) {};
\node[vertex] (4) at (8.5,0.5) {};

\node[vertex][black] (5) at (0.5,6.5) {};
\node[vertex][black] (6) at (0.5,5.5) {};
\node[vertex][black] (7) at (0.5,4.5) {};
\node[vertex][black] (8) at (0.5,3.5) {};
\node[vertex][black] (9) at (0.5,2.5) {};

\node[vertex][black] (10) at (8.5,6.5) {};
\node[vertex][black] (11) at (8.5,5.5) {};
\node[vertex][black] (12) at (8.5,4.5) {};
\node[vertex][black] (13) at (8.5,3.5) {};
\node[vertex][black] (14) at (8.5,2.5) {};

\node[vertex][black] (15) at (2.5,0.5) {};
\node[vertex][black] (16) at (3.5,0.5) {};
\node[vertex][black] (17) at (4.5,0.5) {};
\node[vertex][black] (18) at (5.5,0.5) {};
\node[vertex][black] (19) at (6.5,0.5) {};

\node[vertex][black] (20) at (2.5,8.5) {};
\node[vertex][black] (21) at (3.5,8.5) {};
\node[vertex][black] (22) at (4.5,8.5) {};
\node[vertex][black] (23) at (5.5,8.5) {};
\node[vertex][black] (24) at (6.5,8.5) {};

\node[vertex][gray] (25) at (2.5,1.5) {};
\node[vertex][gray] (26) at (4.5,2.5) {};
\node[vertex][gray] (27) at (6.5,3.5) {};
\node[vertex][gray] (28) at (1.5,4.5) {};
\node[vertex][gray] (29) at (3.5,5.5) {};
\node[vertex][gray] (30) at (5.5,6.5) {};
\node[vertex][gray] (31) at (7.5,7.5) {};
\end{tikzpicture}

\caption{One possible configuration of the guards in the $9\times 9$ strong grid with the guards in white in the corners never moving, the guards on the paths of length seven between the corners in black, and the remaining guards in gray.}
\label{fig:9x9grid}
\end{figure}

We now use Theorem~\ref{thm:main} to prove $\gamma^\infty_{all}(P_m \boxtimes P_{n})\leq \frac{mn}{7}+O(m+n)$ for grids in general when $n,m\geq9$ and to give exact values of $O(m+n)$ in these bounds by employing two disjoint strategies as follows. The strategy from Theorem~\ref{thm:main} is used for the largest $a\times b$ subgrid in the $n\times m$ grid such that $a\equiv b \equiv 2 \ (\mathrm{mod} \ 7)$ and a separate strategy is used for the remaining unguarded vertices where none of the guards from the two strategies are ever utilised in the other strategy and never occupy a vertex that the other strategy is responsible for protecting.

\begin{theorem}\label{thm:allcases}
For any two integers $n,m\geq9$, $\gamma^\infty_{all}(P_m \boxtimes P_n)\leq \frac{ab}{7}+\frac{8}{7}(a+b-1)+\alpha \lceil\frac{n}{2}\rceil+\beta \lceil\frac{m}{2}\rceil - \alpha\beta$ where $a\equiv b \equiv 2 \ (\mathrm{mod} \ 7)$, $0\leq n-a\leq 6$, $0\leq m-b\leq 6$, and
\begin{displaymath}
   \alpha = \left\{
     \begin{array}{ll}
       0 & if\quad m\mod7=2\\
       1 & if\quad m\mod7\in\{3,4\}\\
       2 & if\quad m\mod7\in\{5,6\}\\
       3 & if\quad m\mod7\in\{0,1\}
     \end{array}
   \right.
\end{displaymath}
\begin{displaymath}
   \beta = \left\{
     \begin{array}{ll}
       0 & if\quad n\mod7=2\\
       1 & if\quad n\mod7\in\{3,4\}\\
       2 & if\quad n\mod7\in\{5,6\}\\
       3 & if\quad n\mod7\in\{0,1\}
     \end{array}
   \right.
\end{displaymath}
\end{theorem}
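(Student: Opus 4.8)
The plan is to tile the board into three pairwise-disjoint regions and run three mutually independent eternal-domination strategies, one per region, arranged so that no guard ever leaves its region and no cell is ever defended by a strategy other than the one that owns it. Let $a\le n$ and $b\le m$ be the largest integers with $a\equiv b\equiv 2\pmod 7$; since $n,m\ge 9$ we get $a,b\ge 9$ and $0\le n-a\le 6$, $0\le m-b\le 6$. Let $B$ be the main sub-grid on the first $a$ of the $n$ columns and the first $b$ of the $m$ rows; it is a strong grid whose two sides are $\equiv 2\pmod 7$ and at least $9$, so Theorem~\ref{thm:main} applies and eternally dominates $B$ with $\frac{ab}{7}+\frac{8}{7}(a+b-1)$ guards that stay in $B$ forever. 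What remains is an $L$-shape, which I split into the bottom strip $H$ (the $m-b\le 6$ tracks beyond $B$ in the $m$-direction, spanning all $n$ columns) and the side strip $W$ (the $n-a\le 6$ tracks beyond $B$ in the $n$-direction, spanning only the first $b$ rows), with a little care about which region owns the single interface track in each place.

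The technical heart is an eternal-domination strategy for a strong grid strip $S$ with small side $w\le 6$ and long side $\ell\ge 9$. Place $\lceil w/2\rceil$ parallel guard lines across $S$, two tracks apart, with one line on the outermost track of $S$ (the one adjacent to $B$), each line carrying $\lceil \ell/2\rceil$ guards on every second cell of its track. In a strong grid a guard line dominates its own track and the two adjacent ones, so these lines dominate all $w$ tracks of $S$ together with the one track of $B$ immediately beside $S$. When a cell of $S$ is attacked, the guard of the appropriate line nearest the attacked cell slides onto it while the other guards of that line perform a local shift that restores a dominating configuration of its (slightly extended) track; no other guard --- on another line, in the other strip, or in $B$ --- moves. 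One checks that $\lceil \ell/2\rceil$ guards per line is exactly enough to absorb any sequence of such slides, so $S$ is eternally dominated by $\lceil w/2\rceil\cdot\lceil \ell/2\rceil$ guards that never leave it, and the outermost line of $S$ permanently dominates the neighbouring track of $B$.

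Applying this to $H$ with $\alpha=\lceil(m-b)/2\rceil$ lines costs $\alpha\lceil n/2\rceil$ guards, and by the last point its outermost line permanently dominates the track of $B$ adjacent to $H$; over the columns of $W$ this is precisely the track of $W$ closest to $H$. Hence the guards on $W$ never need to be responsible for that one track, so we may run the strip strategy on $W$ as a strip of long side only $b-1$, which with $\beta=\lceil(n-a)/2\rceil$ lines costs $\beta\lceil(b-1)/2\rceil$ guards. As the three regions partition the vertex set and the three strategies never interact, the total is $\frac{ab}{7}+\frac{8}{7}(a+b-1)+\alpha\lceil n/2\rceil+\beta\lceil(b-1)/2\rceil$, and the stated bound follows from the inequality $\lceil(b-1)/2\rceil\le\lceil m/2\rceil-\alpha$ --- a short parity check from $(b-1)+(m-b)=m-1$ and $\alpha=\lceil(m-b)/2\rceil$ --- together with the fact that $\alpha$ and $\beta$, viewed as functions of $m\bmod 7$ and $n\bmod 7$, are exactly the values of $\lceil(m-b)/2\rceil$ and $\lceil(n-a)/2\rceil$.

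The main obstacle is the strip strategy itself: verifying that $\lceil \ell/2\rceil$ guards per line survive every adversarial attack sequence (in particular attempts to herd a line's guards to one end), that the slide rule never pushes a guard off $S$ or onto a cell owned by another strategy, and --- the most delicate point --- that the outermost line of $H$ can be kept permanently dominating the adjacent track of $B$ while still moving to defend attacks along $H$. Pinning down the additive $O(m+n)$ term also needs a fussy but routine case analysis over $m-b,n-a\in\{0,\dots,6\}$ and the parities of $m,n$, including the degenerate cases $m-b=0$ or $n-a=0$ in which the corresponding strip and its multiplier disappear.
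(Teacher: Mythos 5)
Your overall decomposition is the same as the paper's: run Theorem~\ref{thm:main} on the largest $a\times b$ subgrid with $a\equiv b\equiv 2\pmod 7$, and cover the leftover L-shape with roughly one guard per two vertices of each strip, with $\alpha=\lceil(m-b)/2\rceil$ and $\beta=\lceil(n-a)/2\rceil$. But the execution has two genuine gaps. First, your strip mechanism (guard lines on every other cell of a track, with a ``slide and local shift'' response) is asserted, not proved --- you yourself list surviving herding attacks as the main obstacle. The paper needs no such dynamic argument: each pair of leftover rows (or columns) is statically partitioned into disjoint $K_4$'s, one guard per $K_4$, and that guard alone answers every attack inside its $K_4$ without ever leaving it; eternality is immediate and the count $\lceil n/2\rceil$ per two rows falls out. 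Your sliding-line version is strictly harder to verify and buys nothing.

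Second, and more seriously, your bookkeeping shrinks the side strip $W$ to long side $b-1$ by arguing that the outermost line of $H$ ``permanently dominates'' the track of the L on row $b-1$ beyond column $a$. Domination is not eternal domination: when a vertex of that track is attacked, some guard must \emph{move onto} it, and in your setup no strategy is assigned to do so --- $W$ has disclaimed the track, and if $H$'s line is to answer such attacks it must leave its region, breaking your own disjointness invariant and requiring exactly the ``most delicate point'' you admit you have not verified (keeping that line dominating while its guards also defend $H$). Moreover, if one instead charges $W$ with all $b$ rows, the inequality you would need, $\lceil b/2\rceil\le\lceil m/2\rceil-\alpha$, fails (e.g.\ $m=10$, $b=9$), so the gap cannot be papered over by dropping the $b-1$ trick. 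The paper avoids the issue entirely: the row guards span all $n$ columns and the column guards span all $m$ rows, and the $\alpha\beta$ guards in the overlapping corner are counted once, which is precisely the $-\alpha\beta$ term in the statement. To repair your argument you would either need to prove an eternal-domination strategy for a width-$(m-b+1)$ strip using only $\alpha$ lines (which the easy $K_4$ argument does not give), or switch to the paper's overlapping-count scheme.
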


\begin{proof}
The $\frac{ab}{7}+\frac{8}{7}(a+b-1)$ guards follow the strategy in the proof of Theorem~\ref{thm:main} in the $a\times b$ subgrid which will include at least one corner of the $n\times m$ grid. Separately, there remain $n-a$ columns and $m-b$ rows to protect which are all found on the same side of the $n\times m$ grid due to the placement of the $a\times b$ grid. That is, there are $n-a$ consecutive remaining columns and $m-b$ consecutive remaining rows which overlap near one corner of the $n\times m$ grid (see Figure~\ref{fig:abgrid}).  We can guard the $m-b$ remaining rows with $\alpha \lceil\frac{n}{2}\rceil$ guards, since
one guard every two vertices can protect two rows since the two rows are partitioned into disjoint $K_4$ (plus some remainder due to divisibility) and one guard is assigned to each $K_4$ which clearly he can protect. Thus, we use $\lceil\frac{n}{2}\rceil$ guards for every two rows that remain and if there are an odd number of rows remaining, then we use $\lceil\frac{n}{2}\rceil$ guards to protect the last remaining row. Similarly, $\beta \lceil\frac{m}{2}\rceil$ corresponds to the number of guards required to protect the $n-a$ remaining columns.
Since we have over-counted by $\alpha\beta$ overlapping guards, the bound follows.
\end{proof}

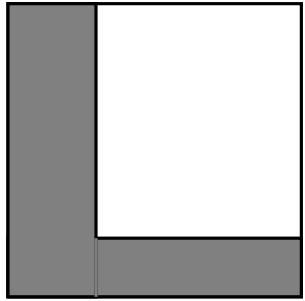
\begin{figure}[h]\centering

\begin{tikzpicture}[scale=0.39]
\draw [very thick] (0,0) rectangle (10,10);
\filldraw[fill=gray, draw=black, very thick] (0,0) rectangle (10,2);
\filldraw[fill=gray, draw=black, very thick] (0,0) rectangle (3,10);
\draw [very thick, gray] (3,0) to (3,2);
\end{tikzpicture}

\caption{The $n\times m$ strong grid with the area in white representing the $a\times b$ subgrid and the area in gray representing the remaining rows and columns.}
\label{fig:abgrid}
\end{figure}

\newpage

Now, we can prove our main result.

\begin{corollary}
For any two integers $n,m\geq9$, $\gamma^\infty_{all}(P_m \boxtimes P_{n})\leq \frac{mn}{7}+O(m+n)$.
\end{corollary}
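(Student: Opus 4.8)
The plan is to obtain this statement immediately from Theorem~\ref{thm:allcases} by absorbing the lower-order contributions into the $O(m+n)$ term. I would first apply Theorem~\ref{thm:allcases}, which for any $n,m \ge 9$ yields
$$\gamma^\infty_{all}(P_m \boxtimes P_n)\le \frac{ab}{7}+\frac{8}{7}(a+b-1)+\alpha \left\lceil\frac{n}{2}\right\rceil+\beta \left\lceil\frac{m}{2}\right\rceil - \alpha\beta,$$
with $a\equiv b\equiv 2 \pmod 7$, $0\le n-a\le 6$, $0\le m-b\le 6$, and $\alpha,\beta$ taking values in $\{0,1,2,3\}$ according to the residues of $m$ and $n$ modulo $7$.

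Next I would bound each term separately. Since $a \le n$ and $b \le m$, we have $\frac{ab}{7} \le \frac{mn}{7}$, so the leading term is already of the desired form. The term $\frac{8}{7}(a+b-1) \le \frac{8}{7}(m+n)$ is linear in $m+n$. Because $\alpha \le 3$ and $\beta \le 3$ are constants independent of $m$ and $n$, we get $\alpha\left\lceil\frac{n}{2}\right\rceil \le 3\left\lceil\frac{n}{2}\right\rceil \le \frac{3}{2}(n+1)$ and $\beta\left\lceil\frac{m}{2}\right\rceil \le \frac{3}{2}(m+1)$, while $-\alpha\beta \le 0$. Summing, every term other than $\frac{mn}{7}$ is $O(m+n)$, which gives the corollary.

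No genuine difficulty arises here: Theorem~\ref{thm:allcases} does all of the real work, and the corollary merely records the consequence after discarding the explicit constants. The only point worth a moment's care is confirming that $\alpha$ and $\beta$ are bounded by an absolute constant rather than growing with the grid dimensions, which is immediate from their definitions in terms of residues modulo $7$; everything else is elementary arithmetic.
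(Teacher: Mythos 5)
Your proposal is correct and matches the paper's argument: the paper likewise derives the corollary directly from Theorem~\ref{thm:allcases}, with the absorption of the $\frac{8}{7}(a+b-1)$, $\alpha\lceil\frac{n}{2}\rceil$, $\beta\lceil\frac{m}{2}\rceil$, and $-\alpha\beta$ terms into $O(m+n)$ left implicit. You simply spell out the elementary bookkeeping that the paper omits.
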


\begin{proof}
This follows directly from Theorem~\ref{thm:allcases}.
\end{proof}



\section{Comparison with other bounds}

In a parallel work, Mc Inerney, Nisse, and P\'erennes \cite{MNP19} show that if $m \geq n$, then $\gamma^{\infty}_{all}(P_n\boxtimes P_m) \leq \lceil\frac{m}{3}\rceil \lceil\frac{n}{3}\rceil + O(m\sqrt{n})$.
The general configuration which is maintained is to (a) fill some number of rows and columns with stationary guards so that the dimensions of the remaining $m^* \times n^*$ grid satisfy necessary divisibility conditions, (b) add additional rows and columns of guards to allow passage of guards around the outside of the subgrid, (c) partition the subgrid into $m^* \times k$ smaller subgrids, (d) place guards along boundary layers of each subgrid, and (e) place one guard for every nine vertices of the interior of each subgrid in such a way that every attack has a response, transferring guards as need be through the boundary layers.


In the best case ({\it i.e.}, it is not necessary to fill some rows/columns with guards to ensure the remaining subgrid satisfies divisibility conditions), the bound from \cite{MNP19} is
\begin{align}\label{ninth}
\left(\frac{m - 2\alpha}{3\alpha-1}\right)\left(\left\lceil \frac{(n-2)(3\alpha - 3)}{9} \right\rceil + 2n + 6\alpha - 6 \right) + 2\alpha(m+n-2\alpha).
\end{align}
where $\alpha = \frac{k-2}{3} + 1$ and $k$ is the greatest integer less or equal to $\sqrt{n}$ for which $k \equiv 2 \pmod 3$ (note that $k > \sqrt{n}-3$).


The worst case for our bound is when $\alpha = \beta = 3$, $a = n-5$, and $b=m-5$, as this requires packing the most stationary guards around two sides of the grid.  This gives
\begin{align}\label{seventh}
\frac{(m-5)(n-5)}{7}+\frac{8(m+n-11)}{7}+ 3\left\lceil\frac{m}{2}\right\rceil + 3\left\lceil\frac{n}{2}\right\rceil - 9.
\end{align}

\begin{figure}[h!]
    \centering
    \includegraphics[width=3in]{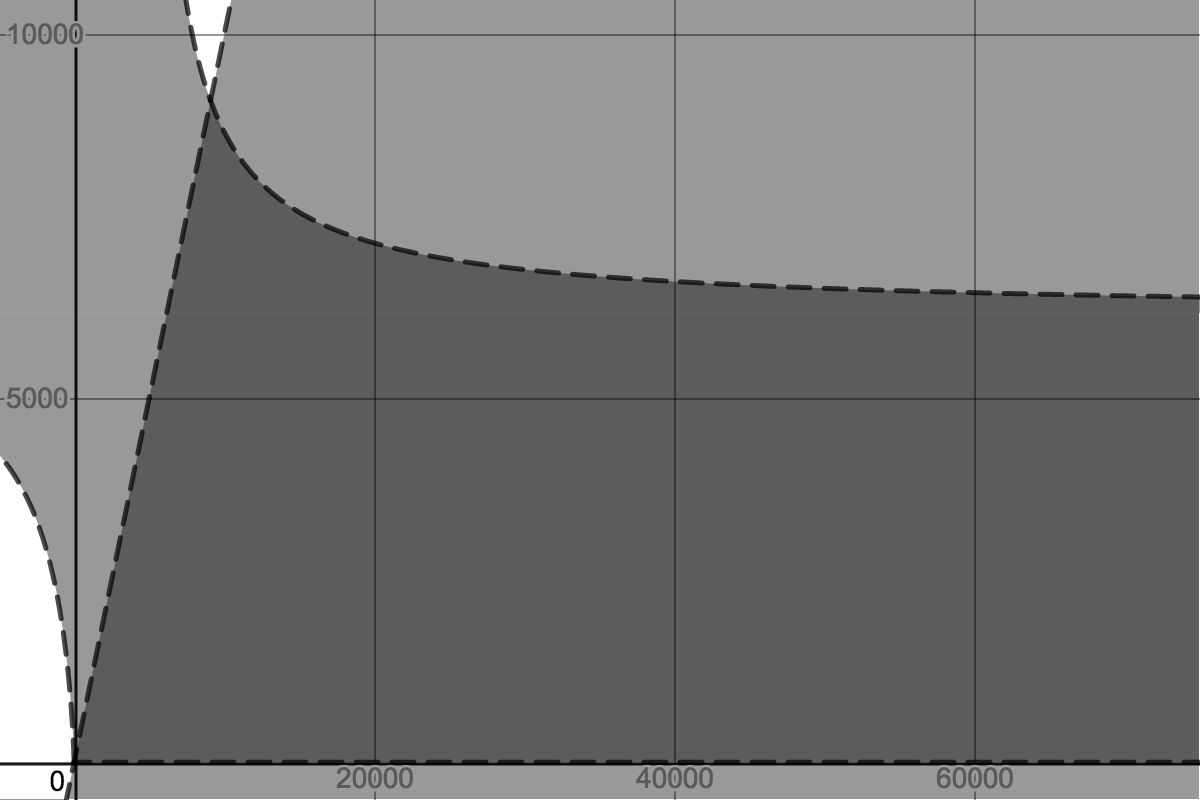}
    \caption{Comparison of bounds (\ref{ninth}) and (\ref{seventh}); $m$ as horizontal axis, $n$ as vertical axis}
    \label{comparison}
\end{figure}

The dark shaded region shown in the graph in Figure \ref{comparison} gives the values of $m$ and $n$ for which the bound in (\ref{seventh}) bests the bound in (\ref{ninth}), using $k = \sqrt{n}-3$ to express (\ref{ninth}) as a function of $m$ and $n$ only.  Note that the bounding function as $m \to \infty$ eventually stays strictly between $n = 6179$ and $n = 6180$, and thus our result is best when $n < 6180$.

We point out that the authors of \cite{MNP19} did not attempt to optimize the constants in their result (nor did we in this paper), only to show that the domination number plus some low order terms was an upper bound for $\gamma^\infty_{all}$.  However, the ``dense'' guards surrounding each subgrid is an integral part of their argument, leading to the $O(m\sqrt{n})$ term in their result which cannot be dropped (unless a new method is found to guard the boundaries).  As a result, even with optimization of constants, our strategy should be preferred for sufficiently ``skinny'' $n \times m$ strong grids.

\section{Acknowledgements}
This work was undertaken while the second, third, and fourth authors were affiliated with Dawson College,and while the eighth author was affiliated with Universit\'e de Montr\'eal.  The sixth author acknowledges the generous support of MITACS and the Globalink Program.  The fourth and seventh authors received financial support for this research from the Fonds de recherche du Québec - Nature et technologies.

\bibliographystyle{alpha}
{\footnotesize \bibliography{eterdom_strong_grid}}

\newcommand{\etalchar}[1]{$^{#1}$}
\begin{thebibliography}{CMINP20}

\bibitem[AF95]{AF95}
J.~Arquilla and H.~Fredricksen.
\newblock ``{G}raphing'' an optimal grand strategy.
\newblock {\em Military Oper.~Res.}, 1(3):3--17, 1995.

\bibitem[BCG{\etalchar{+}}04]{BurgerCG+04}
A.~Burger, E.~J. Cockayne, W.~R. Gr\"{u}ndlingh, C.~M. Mynhardt, J.~H. van
  Vuuren, and W.~Winterbach.
\newblock Infinite order domination in graphs.
\newblock {\em J.~Combin.~Math.~Combin.~Comput.}, 50:179--194, 2004.

\bibitem[BFM13]{BeatonFM13}
I.~Beaton, S.~Finbow, and J.A. MacDonald.
\newblock Eternal domination numbers of $4\times n$ grid graphs.
\newblock {\em J.~Combin.~Math.~Combin.~Comput.}, 85:33--48, 2013.

\bibitem[BSL15]{BAS15}
A.~Braga, C.~Souza, and O.~Lee.
\newblock The eternal dominating set problem for proper interval graphs.
\newblock {\em Inform.~Process.~Lett.}, 115:582--587, 2015.

\bibitem[CMINP20]{CMNP18}
N.~Cohen, F.~Mc~Inerney, N.~Nisse, and S.~P{\'e}rennes.
\newblock Study of a combinatorial game in graphs through linear programming.
\newblock {\em Algorithmica}, 82(2):212--244, Feb 2020.

\bibitem[CMM{\etalchar{+}}18]{CMM^+17}
N.~Cohen, N.~A. Martins, F.~{Mc Inerney}, N.~Nisse, S.~P\'{e}rennes, and
  R.~Sampaio.
\newblock Spy-game on graphs: Complexity and simple topologies.
\newblock {\em Theoret.~Comput.~Sci.}, 725:1--15, 2018.

\bibitem[CWCF{\etalchar{+}}]{F16}
C.-M. Chavy-Waddy, D.~Coakley, S.~Finbow, M.~Pazos, and M.~van Bommel.
\newblock Personal communication.

\bibitem[DM17]{MessingerD15}
A.~Z. Delaney and M.~E. Messinger.
\newblock Closing the gap: Eternal domination on $3\times n$ grids.
\newblock {\em Contrib.~Discrete Math.}, 12(1):47--61, 2017.

\bibitem[FMvB15]{FinbowMB15}
S.~Finbow, M.~E. Messinger, and M.~F. van Bommel.
\newblock Eternal domination in $3 \times n$ grids.
\newblock {\em Australas.~J.~Combin.}, 61:156--174, 2015.

\bibitem[FvB20]{finbow2020eternal}
S.~Finbow and M.~van Bommel.
\newblock The eternal domination number for 3$\times$ n grid graphs.
\newblock {\em Australas.~J.~Combin.}, 76(1):1--23, 2020.

\bibitem[GHH05]{GHH05}
W.~Goddard, S.~M. Hedetniemi, and S.~T. Hedetniemi.
\newblock Eternal security in graphs.
\newblock {\em J.~Combin.~Math.~Combin.~Comput.}, 52, 2005.

\bibitem[GKM13]{GKM13}
J.~L. Goldwasser, W.~F. Klostermeyer, and C.~M. Mynhardt.
\newblock Eternal protection in grid graphs.
\newblock {\em Util. Math.}, 91:47--64, 2013.

\bibitem[GPRT11]{GoncalvesPRT11}
D.~Gon{\c{c}}alves, A.~Pinlou, M.~Rao, and S.~Thomass{\'{e}}.
\newblock The domination number of grids.
\newblock {\em {SIAM} J.~Discrete Math.}, 25(3):1443--1453, 2011.

\bibitem[KM09]{KM09}
W.~F. Klostermeyer and G.~MacGillivray.
\newblock Eternal dominating sets in graphs.
\newblock {\em J.~Combin.~Math.~Combin.~Comput.}, 68, 2009.

\bibitem[KM16]{KlosMyn14}
W.~F. Klostermeyer and C.~M. Mynhardt.
\newblock Protecting a graph with mobile guards.
\newblock {\em Appl.~Anal.~Discrete Math.}, 10:1–29, 2016.

\bibitem[LMS19]{LamprouMS16a}
I.~Lamprou, R.~Martin, and S.~Schewe.
\newblock Eternally dominating large grids.
\newblock {\em Theoret.~Comput.~Sci.}, 794:27 -- 46, 2019.

\bibitem[MINP19]{MNP19}
F.~Mc~Inerney, N.~Nisse, and S.~P\'{e}rennes.
\newblock Eternal domination in grids.
\newblock In {\em Algorithms and complexity}, volume 11485 of {\em Lecture
  Notes in Comput. Sci.}, pages 311--322. Springer, Cham, 2019.

\bibitem[Rev97]{Revelle97}
C.~S. Revelle.
\newblock Can you protect the roman empire?
\newblock {\em Johns Hopkins Mag.}, 50(2), 1997.

\bibitem[RR00]{RR00}
C.~S. Revelle and K.~E. Rosing.
\newblock Defendens imperium romanum: A classical problem in military strategy.
\newblock {\em Amer.~Math.~Monthly}, 107:585--594, 2000.

\bibitem[Ste99]{Stewart99}
I.~Stewart.
\newblock Defend the roman empire!
\newblock {\em Scientific American}, pages 136--138, 1999.

\bibitem[vBvB16]{vBommel16}
C.~M. van Bommel and M.~F. van Bommel.
\newblock Eternal domination numbers of $5\times n$ grid graphs.
\newblock {\em J.~Combin.~Math.~Combin.~Comput.}, 97:83--102, 2016.

\end{thebibliography}

\end{document}